
\documentclass[a4paper,fleqn]{cas-dc}

\usepackage[numbers]{natbib}

\usepackage[justification=centering]{caption}

\usepackage{amsfonts,amsmath,amssymb,amsthm}

\usepackage{threeparttable}

\usepackage{algorithm} 
\usepackage[noend]{algpseudocode} 

\usepackage{subcaption}

\def\tsc#1{\csdef{#1}{\textsc{\lowercase{#1}}\xspace}}
\tsc{WGM}
\tsc{QE}
\tsc{EP}
\tsc{PMS}
\tsc{BEC}
\tsc{DE}

\newtheorem{Definition}{Definition}
\newtheorem{Theorem}{Theorem}

\begin{document}
\let\WriteBookmarks\relax
\def\floatpagepagefraction{1}
\def\textpagefraction{.001}
\shorttitle{Achieve Efficient Position-Heap-based Privacy-Preserving Substring-of-Keyword Query over Cloud}
\shortauthors{Fan~Yin et~al.}

\title [mode = title]{Achieve Efficient Position-Heap-based Privacy-Preserving Substring-of-Keyword Query over Cloud}                      



\author[1,2]{Fan~Yin}
\ead{yinfan519@gmail.com}


\address[1]{The Information Security and National Computing Grid Laboratory, Southwest Jiaotong University, Chengdu, China 611756}

\author[2]{Rongxing~Lu}[orcid=0000-0001-5720-0941]
\cormark[1]
\ead{rlu1@unb.ca}

\author[2]{Yandong~Zheng}
\ead{yzheng8@unb.ca}


\address[2]{The Canadian Institute for Cybersecurity,  Faculty of Computer Science, University of New Brunswick, Fredericton, Canada E3B 5A3}

\author[3]{Jun~Shao}
\ead{chn.junshao@gmail.com}

\address[3]{School of Computer and Information Engineering, Zhejiang Gongshang University, Hangzhou, China 310018}

\author[4,5]{Xue~Yang}
\ead{yang.xue@sz.tsinghua.edu.cn}
\address[4]{The Tsinghua Shenzhen International Graduate School, Tsinghua University, Shenzhen, China 518055}
\address[5]{The PCL Research Center of Networks and Communications, Peng Cheng Laboratory, Shenzhen, China 518055}

\author[1]{Xiaohu~Tang}
\ead{xhutang@swjtu.edu.cn}

\cortext[cor1]{Corresponding author}


\begin{abstract}
The cloud computing technique, which was initially used to mitigate the explosive growth of data, has been required to take both data privacy and users' query functionality into consideration. Symmetric searchable encryption (SSE) is a popular solution to supporting efficient keyword queries over encrypted data in the cloud. However, most of the existing SSE schemes focus on the exact keyword query and cannot work well when the user only remembers the substring of a keyword, i.e., substring-of-keyword query. This paper aims to investigate this issue by proposing an efficient and privacy-preserving substring-of-keyword query scheme over cloud. First, we employ the position heap technique to design a novel tree-based index to match substrings with corresponding keywords. Based on the tree-based index, we introduce our substring-of-keyword query scheme, which contains two consecutive phases. The first phase queries the keywords that match a given substring, and the second phase queries the files that match a keyword in which people are really interested. In addition, detailed security analysis and experimental results demonstrate the security and efficiency of our proposed scheme.
\end{abstract}



\begin{keywords}
Cloud computing \sep Outsourced encrypted data \sep Substring-of-keyword query \sep Position heap \sep Efficiency 
\end{keywords}

\maketitle

\section{Introduction}

The rapid development of information techniques, e.g., internet of things, smart building, etc., has been promoting the explosive growth of the data. According to IBM Marketing Cloud study \cite{cloud10}, more than 90\% of data on the internet has been created since 2016. In order to mitigate the local storage and computing pressure, an increasing number of individuals and organizations tend to store and process their data in the cloud. However, since the cloud server may not be fully trustable, those data with some sensitive information (e.g., electronic health records) have to be encrypted before being outsourced to the cloud \cite{ZhengLLSYC19}. Although the data encryption technique can preserve data privacy, it also hides some critical information such that the cloud server cannot well support some users' query functionality over the encrypted data, e.g., keyword query, which returns the collection of files containing some specific queried keywords. In order to address the challenge, the concept of symmetric searchable encryption (SSE)~\cite{SongWP00} was introduced, which enables the cloud server to search encrypted files in a very efficient way.

Over the past years, in order to improve the keyword query efficiency, a variant of secure keyword-based index techniques have been designed to match the keywords with corresponding files, such as inverted index~\cite{CurtmolaGKO06, CashJJKRS13, Cash2014}, tree-based index~\cite{goh2003, yin2019, shao2019}, etc. Since the current keyword-based index techniques are built with exact keywords, the existing SSE schemes can only support exact keyword query, i.e., the queried keyword must be exactly the same keyword stored in cloud. 

{However, in practice, it is quite common that a user only remembers a fragment/substring of a keyword rather than the exact keyword and expects to achieve a substring-of-keyword query, i.e., \textit{the user first queries some candidate keywords containing a substring to help him/her complete the queried keyword and then queries files that match the queried keyword.} Considering the Google website example, it automatically returns a list of candidate keywords after users enter a fragment of the queried keyword to the search bar. This feature can help users efficiently enter the correct queried keyword before a real search. Unfortunately, most SSE schemes with the current keyword-based index techniques cannot be directly used to support the substring-of-keyword query because their indexes do not contain the substring information. Although some SSE schemes \cite{chase2015, leontiadis2018storage, mainardi2019privacy, hahn2018practical} focus on the substring query and can be used to implement substring-of-keyword query, they cannot achieve high efficiency in terms of the computational cost of query processing and the overhead of storage at the same time.}

	To address the above challenge, in this paper, we consider a fine-grained SSE scheme supporting substring-of-keyword query, which consists of two consecutive phases. The first phase, called the substring-to-keyword query, is to query a list of candidate keywords containing a given specific substring, and then the user chooses the keyword that he/she needs from candidate keywords. The second phase, called the keyword-to-file  query, is to query files that match the chosen keyword. Specifically, the main contributions of this paper are three-fold:

\begin{itemize}		
	\item First, based on the position heap technique, we design a storage-efficient index (i.e., modified position heap) to match substrings with corresponding keywords. We then use pseudo-random function and symmetric encryption scheme to encrypt this index, which can not only well support the substring-to-keyword query, but also preserve the privacy of queried substring as well as the plaintext of the keywords.
	
	
	\item Second, we proposed an efficient and privacy-preserving substring-of-keyword query scheme, which consists of a substring-to-keyword query and a keyword-to-file query. This scheme is suitable for critical applications in practice such as Google search.
	
	\item Finally, we analyze the security of our proposed scheme and conduct extensive experiments to evaluate its performance. The results show that our proposed scheme can achieve efficient queries in terms of low computational cost and communication overhead.
\end{itemize}

The remainder of the paper is organized as follows.
We formalize the system model, security model, and design goals in Section~\ref{ch2:sec:models}. Then, we introduce some preliminaries including the position heap technique \cite{EhrenfeuchtMOW11}, symmetric encryption scheme, and the security notion of substring-to-keyword query in Section~\ref{ch2:sec:preliminaries}. After that, we present our proposed scheme in Section~\ref{ch2:sec:proposed}, followed by security analyses and performance evaluation in Section~\ref{ch2:sec:security} and Section~\ref{ch2:sec:performance}, respectively. Some related works are discussed in Section \ref{ch2:sec:related}. Finally, we draw our conclusions in Section~\ref{ch2:sec:conclusions}.

\section{Models and Design Goals}\label{ch2:sec:models}
In this section, we formalize the system model, security model, and identify our design goals.

\subsection{System Model}\label{ch2:subsec:system_model}
In our system model, we consider two entities, namely a cloud server and a data user, as shown in Figure~\ref{ch2:fig:system_model}.

\begin{figure}[htbp]
	\centering
	\includegraphics[width=0.98\linewidth]{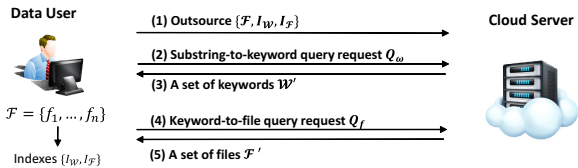}
	\caption{System model under consideration}
	\label{ch2:fig:system_model}
\end{figure}

\begin{itemize}
	{\item \underline{\textbf{Data user}}: The data user has a collection of files $\mathcal{F}=\{f_1,f_2,...,f_n\}$ and each file $f_j \in \mathcal{F}$ consists of a set of keywords from a dictionary $\mathcal{W} = \{\omega_1, \omega_2, ..., \omega_d\}$. Due to the limited storage space and computational capability, the data user intends to outsource the file collection $\mathcal{F}$ and its indices, i.e., $I_\mathcal{W}$ -- index for substring-to-keyword query, $I_\mathcal{F}$ -- index for keyword-to-file query, to the cloud server. Then, the data user launches a substring-of-keyword query with the cloud server. The substring-of-keyword query consists of two consecutive phases: a substring-to-keyword query and a keyword-to-file query. To be more specific, the data user first submits a substring-to-keyword query request $Q_\omega$ to the cloud server and retrieves a set of keywords $\mathcal{W'} \subseteq \mathcal{W}$ containing the given substring. Then, the data user chooses the queried keyword from $\mathcal{W'}$ and uses a queried keyword to submit a keyword-to-file query request $Q_f$ to retrieve a set of files $\mathcal{F}' \subseteq \mathcal{F}$ containing the queried keyword.}
	
	\item \underline{\textbf{Cloud server}}: The cloud server is considered to be powerful in storage space and computational capability. The cloud server is assumed to efficiently store file collection $\mathcal{F}$ and its indices $\{I_\mathcal{W}, I_\mathcal{F}\}$ in local. In addition, the cloud server will process two types of query requests: {substring-to-keyword query} request $Q_\omega$ and keyword-to-file query request $Q_f$. For the former, the cloud server conducts search operation in the index $I_\mathcal{W}$ and responds a set of keywords $\mathcal{W}' \subseteq \mathcal{W}$; For the latter, the cloud server conducts search operation in the index $I_\mathcal{F}$ and responds a set of files $\mathcal{F}' \subseteq \mathcal{F}$.
\end{itemize}


\subsection{Security Model}
In our security model, the data user are considered as trusted, while the cloud server is assumed as \emph{honest-but-curious}, which means that the cloud server will i) honestly execute the query processing, return the query results without tampering it, and ii) curiously infer as much sensitive information as possible from the available data. The sensitive information could include the files $\mathcal{F}$, the indices $\{I_\mathcal{W}, I_\mathcal{F}\}$, the substring-to-keyword query request $Q_\omega$, and the keyword-to-file query request $Q_f$.

Note that, since we focus on the efficiency and confidentiality of the proposed scheme, other active attacks on data integrity and source authentication are beyond the scope of this paper and will be discussed in our future work. 

\subsection{Design goals}
In this work, our design goal is to achieve an efficient and privacy-preserving substring-of-keyword query scheme. In particular, the following three requirements should be achieved.

\begin{itemize}
	\item \textit{Privacy preservation}. In the proposed scheme, all the data obtained by the cloud server, i.e., $\{\mathcal{F}, I_\mathcal{W}, I_\mathcal{F}, Q_\omega,$ $Q_f\}$, should be privacy-preserving during the outsourcing, query, and update phases. Formally, the proposed scheme needs to satisfy security definition~\ref{def:security}.
	
	\item \textit{Efficiency}. In order to achieve the above privacy requirement, additional computational cost and storage overhead will inevitably be incurred. Therefore, in this work, we also aim to reduce the computational cost and communication overhead to be linear with the length of the queried substring.
	
	\item \textit{Dynamics}. Update operations should be efficiently and securely supported after the initial outsourcing.
\end{itemize}

\section{Preliminary}
\label{ch2:sec:preliminaries}
In this section, we recall some preliminaries including the position heap~\cite{EhrenfeuchtMOW11}, the symmetric encryption scheme, and the security notion of {substring-to-keyword} query, which will be served as the basis of our proposed scheme.

\subsection{The (Original) Position Heap Technique} \label{ch2:subsec:positionheap}	

\begin{figure}[h]
	\centering
	\includegraphics[width=0.98\linewidth]{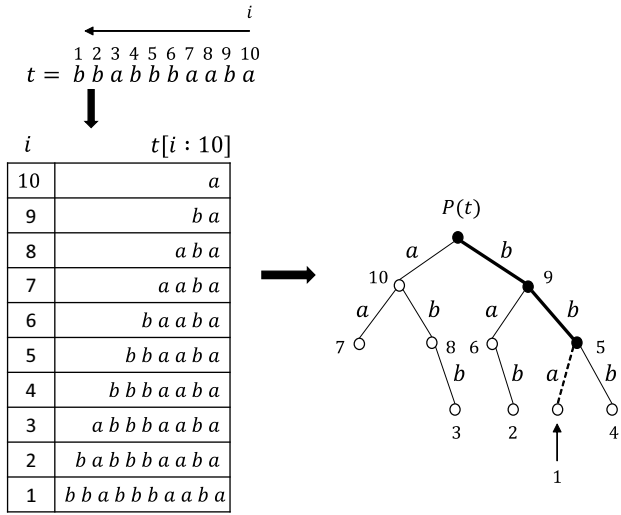}
	\captionsetup{justification=justified}
	\caption{{An example of building position heap $P(t)$ for string $t = bbabbbaaba$. The solid edges in $P(t)$ reflect the insertion path for suffix $t[1:10]$.}}
	\label{ch2:fig:example_of_building_position_heap}
\end{figure}

{Intuitively speaking, the (original) position heap $P(t)$ is a trie built from all the suffixes of $t$ and can be used to achieve efficient substring search for $t$. To construct the position heap $P(t)$ from a string $t = c_1c_2...c_m$, a set of suffixes $t[i:m] = c_i ... c_m\ (i \in [m, ..., 1])$ are chosen and inserted to the $P(t)$, which is initialized as a root node. To do this, for each suffix $t[i:m]\ (i \in [m, ..., 1])$, its longest prefix $t[i:j]\ (i \le j \le m)$ that is already represented by a path in $P(t)$ is found and a new leaf child is added to the last node of this path. The new leaf child is labeled with $i$ and its edge is labeled with $t[j+1]$ (see Figure~\ref{ch2:fig:example_of_building_position_heap}). 
	Compared to other data structures to achieve substring search, such as suffix tree \cite{chase2015} and suffix array \cite{leontiadis2018storage}, the position heap \cite{EhrenfeuchtMOW11} can achieve high efficiency in both storage and query time.

	
	
	In the following, we formally describe the $\mathtt{PHBuild}$ and $\mathtt{PHSearch}$ algorithms of the position heap. Note that, we consider each node in the position heap stores two attributes: $edge$ and $pos$, where the former stores the label of the node's edge and the latter stores the label of the node.}

\begin{algorithm}[htbp]
	\caption{Build a position heap $P(t)$ for the string $t=c_1c_2...c_m$} \label{alg:build}
	\begin{algorithmic}[1]
		\State {initialize position heap $P(t)$ as a root node $R$, where $R.edge = Null$ and $R.pos = Null$;}
		\For {each $i$ in $[m, m-1, ..., 1]$}
		\State {$N = R$;}
		\For {each $j$ in $[i, i+1, ..., m]$}
		\State {find the child $N'$ of $N$, where $N'.edge = c_j$;}
		\If {$N'$ does exist}
		\State {$N = N'$}
		\Else
		\State {$j = j - 1$;}
		\State {break;}
		\EndIf	
		\EndFor
		\State {insert a new child node $N'$ to the $N$;}
		\State {$N'.edge = c_{j+1}, N'.pos = i$;}
		\EndFor
		\State {return $P(t)$;}
	\end{algorithmic}
\end{algorithm}

\begin{algorithm}[H]
	\caption{Search substring $s$ in a position heap $P(t)$, where $s = s_1s_2...s_l$ and $t = c_1c_2...c_m$} \label{alg:Search}
	\begin{algorithmic}[1]
		\State {initial empty sets $L_1$ and $L_2$;}	
		\State {let $N$ be the root node of the $P(t)$;}
		\For {each $i$ in $[1, 2, ..., l]$}
		\State {find the child $N'$ of $N$, where $N'.edge = s_i$;}
		\If {$N'$ does exist}
		\If {$i = l$}
		\State {$L_2.add(N'.pos)$;}
		\For {each descendant $X$ of $N'$}
		\State {$L_2.add(X.pos)$;}
		\EndFor
		\Else
		\State {$L_1.add(N'.pos)$;}
		\EndIf
		\State {$N = N'$;}
		\Else
		\State {break;}
		\EndIf
		\EndFor
		\For {each $i$ in $L_1$}
		\If {$c_ic_{i+1}...c_{i+l-1}$ is not equal to $s_1s_2...s_l$}
		\State {$L_1.remove(i)$;}
		\EndIf
		\EndFor
		\State {return $L_1 \cup L_2$;}
	\end{algorithmic}
\end{algorithm}

\subsubsection{$\mathtt{PHBuild}$ Algorithm}
Given a string $t=c_1c_2...c_m$, the $\mathtt{PHBuild}$ (i.e., Algorithm~\ref{alg:build}) first initializes position heap $P(t)$ as a root node. Then, it visits the $t$ from the right to left and inserts each suffix $t[i:m]\ (i \in [m,..,1])$ to the position heap $P(t)$.
In particular, for each suffix $t[i:m]$, the algorithm first finds its longest prefix $t[i:j]\ (i \le j \le m)$ that is already represented by a path in $P(t)$ (lines 4-10). Assume the last node of this path is $N$. Then the algorithm appends a new leaf child $N'$ to the $N$, where $N'.edge = c_{j+1}$ and $N'.pos = i$ (lines 11-12). Figure~\ref{ch2:fig:example_of_building_position_heap} depicts an example to build such a position heap for a string $t = bbabbbaaba$. During the insertion for suffix $t[1:10]$, this algorithm finds its longest prefix $t[1:2]$ represented by the solid path and appends a new leaf child $N'$ to the last node of the solid path, where $N'.edge = a$ and $N'.pos = 1$.	

\subsubsection{$\mathtt{PHSearch}$ Algorithm}
Given a substring $s$ and a position heap $P(t)$, the $\mathtt{PHSearch}$ (i.e., Algorithm~\ref{alg:Search}) is supposed to find all the positions in $t$ that is occurrences of $s$. The time complexity of this algorithm is $O(|s|^2 + d_s)$, where $|s|$ is the length of the queried substring and $d_s$ is the number of matching positions. The details are as follows:
\begin{itemize}
	\item The algorithm first finds the longest prefix $s'$ of $s$ that can be represented by a path in $P(t)$ and denote this path as search path. Then the algorithm {lets} $L_1$ be the set of $pos$ stored in the intermediate nodes along the search path and $L_2$ be the set of $pos$ stored in the descendants of the last node of the search path (lines 3-14). In particular, if $s' \neq s$, the $pos$ stored in the last node of the search path is included in $L_1$. Otherwise, it is included in $L_2$. 
	\item {After completing the previous step, elements in $L_2$ must be the matching positions, and elements in $L_1$ may or may not be the matching positions. Therefore, the algorithm reviews each position $i \in L_1$ in the string $t$ to filter out unmatching positions and remove them from the $L_1$.} Finally, this algorithm returns $L_1 \cup L_2$ (lines 15-17).
\end{itemize}
Take an example with Figure~\ref{ch2:fig:example_of_building_position_heap}. Given a substring $s=bb$, the $\mathtt{PHSearch}$ algorithm finds its longest prefix $bb$ corresponding to the solid path. In this way, $L_1$ and $L_2$ are equal to $\{9\}$ and $\{5, 1, 4\}$. Then, this algorithm reviews the string $t$ and makes sure $i=9 \in L_1$ is not an occurrence of $s$. Therefore, the position $9$ is removed from the $L_1$, and $L_1$ is an empty set now. Finally, this algorithm returns all the $pos$ in $L_1 \cup L_2 = \{5, 1, 4\}$.

\subsection{Symmetric Key Encryption Scheme}
{A symmetric key encryption scheme (SKE) is a set of three polynomial-time algorithms $(Gen, Enc, Dec)$ such that $Gen$ takes a security parameter $\lambda$ and returns a secret key $K$; $Enc$ takes a key $K$ and a message $M$, then returns a ciphertext $C$; $Dec$ takes a key $K$ and a ciphertext $C$, then returns $M$ if $K$ was the key under which $C$ was produced. In this work, we consider a SKE is indistinguishable under chosen plaintext attack (IND-CPA)~\cite{katz2014introduction}, which guarantees the ciphertext does not leak any information about the plaintext even an adversary can query an encryption oracle. We note that common private-key encryption schemes such as AES in counter mode satisfy this definition.}

\subsection{Security Definition of Substring-to-Keyword Query} \label{ch2:sec:Security_Definition}
In this subsection, we follow the security definition in \cite{CurtmolaGKO06} to formalize the simulated-based security definition of substring-to-keyword query scheme by using the following two experiments: $Real_{\mathcal{A,C}}(\lambda)$ and $Ideal_{\mathcal{A,S}}(\lambda)$. In the former, the adversary $\mathcal{A}$, who represents the cloud server, executes the proposed scheme with a challenger $\mathcal{C}$ that represents the data user. In the latter, $\mathcal{A}$ also executes the proposed scheme with a simulator $\mathcal{S}$ who simulates the output of the challenger $\mathcal{C}$ through the leakage of the proposed scheme. The leakage is parameterized by a leakage function collection $\mathcal{L}=(\mathcal{L}_{O}, \mathcal{L}_{Q}, \mathcal{L}_{U})$, which describes the information leaked to the adversary $\mathcal{A}$ in the outsourcing phase, query phase, and update phase respectively. If any polynomial adversary $\mathcal{A}$ cannot distinguish the output information between the challenger $\mathcal{C}$ and the simulator $\mathcal{S}$, then we can say there is no other information leaked to the adversary $\mathcal{A}$, i.e., the cloud server, except the information that can be inferred from the $\mathcal{L}$. More formally,
\begin{itemize}
	\item $Real_{\mathcal{A,C}}(1^\lambda) \to b \in \{0,1\}$: Given a keyword dictionary $\mathcal{W}$ chosen by the adversary $\mathcal{A}$, the challenger $\mathcal{C}$ outputs encrypted index $I$ by following the outsourcing phase of the proposed scheme. Then, $\mathcal{A}$ can adaptively send a polynomial number of query requests (or update requests) to the $\mathcal{C}$, which outputs corresponding encrypted query requests (or encrypted update requests). Eventually, $\mathcal{A}$ returns a bit $b$ as the output of this experiment.
	
	\item $Ideal_{\mathcal{A,S}}(1^\lambda) \to b \in \{0,1\}$: Given the leakage function $\mathcal{L}_{O}$, the simulator outputs simulated encrypted index $\overline{I}$. Then, for each query request (or update request), the adversary $\mathcal{A}$ sends its leakage function $\mathcal{L}_{Q}$ (or $\mathcal{L}_{U}$) to the simulator $\mathcal{S}$, which generates the corresponding simulated encrypted query request (or encrypted update request). Eventually, $\mathcal{A}$ returns a bit $b$ as the output of this experiment.
\end{itemize}

\begin{Definition} \label{def:security}
	A substring-to-keyword query scheme is $\mathcal{L}$-secure against adaptive attacks if for any probabilistic polynomial time adversary $\mathcal{A}$, there exists an efficient simulator $\mathcal{S}$ such that\\
	$$
	|{\rm Pr}[Real_{\mathcal{A, C}}(\lambda) \to 1]-{\rm Pr}[Ideal_{\mathcal{A,S,L}}(\lambda) \to 1]| \le negl(\lambda).
	$$
\end{Definition}

\begin{figure*}
	\centering
	\includegraphics[width=0.98\linewidth]{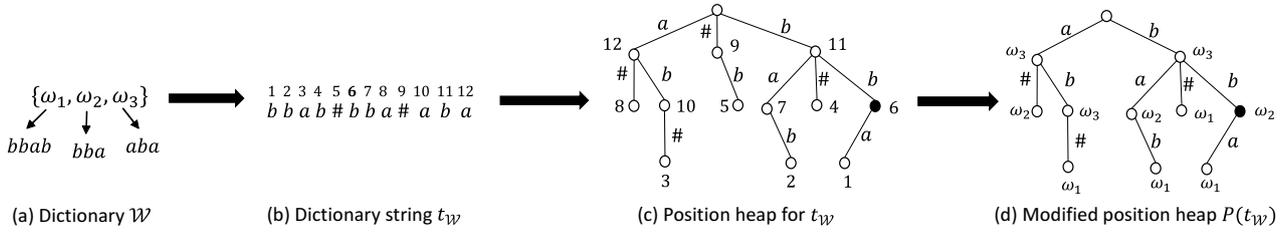}
	\captionsetup{justification=justified}
	\caption{An example of building a modified position heap for a dictionary $\mathcal{W}$. (a) $\mathcal{W} = \{\omega_1, \omega_2, \omega_3\}$ is a dictionary, where $\omega_1 = bbab$, $\omega_2 = bba$, and $\omega_3 = aba$. (b) To get dictionary string $t_\mathcal{W}$, concatenate all the keywords in $\mathcal{W}$ with character $\#$. (c) Build an original position heap for $t_\mathcal{W}$. (d) For each node $N$, replace its $N.pos$ with the corresponding keyword, called $N.keyword$. At the same time, remove useless paths from the $P(t_\mathcal{W})$.}
	\label{ch2:fig:positionheapforkeywords}
\end{figure*}

\section{Our Proposed Scheme}
\label{ch2:sec:proposed}
In this section, we will present our substring-of-keyword query scheme. Before delving into the details, we first introduce a modified position heap for keyword dictionaries, which is a basic building block of our proposed scheme.

\subsection{Modified Position Heap for Keyword Dictionaries}\label{ch2:subsec:modifiedpositionheap}	
In order to process substring-to-keyword query efficiently, we first design a modified position heap to index all the keywords in a dictionary, which mainly consists of two algorithms: i) $\mathtt{MPHBuild}$ Algorithm; ii) $\mathtt{MPHSearch}$ Algorithm.

\subsubsection{$\mathtt{MPHBuild}$ Algorithm}
Given a dictionary $\mathcal{W} = \{\omega_1, \omega_2, ..., \omega_d\}$, the $\mathtt{MPHBuild}$ algorithm first transforms the dictionary $\mathcal{W} = \{\omega_1, \omega_2, ..., \omega_d\}$ to a string $t_{\mathcal{W}} = \omega_1 || \# || \omega_2 || \# ... \# || \omega_d$, where $||$ denotes the concatenation operation and $\#$ denotes a {separate} character that does not appear in any $\omega \in \mathcal{W}$. In the rest of this paper, we call this string $t_\mathcal{W}$ \textit{dictionary string}. Then, this algorithm follows $\mathtt{PHBuild}$ algorithm (i.e., Algorithm~\ref{alg:build}) to build an original position heap for this dictionary string $t_\mathcal{W}$ and further modifies it to modified position heap $P(t_\mathcal{W})$ as follows: i) For each node $N$, replace its $N.pos$ with the corresponding keyword in $t_{\mathcal{W}}$, called $N.keyword$. ii) At the same time, remove useless paths, whose edges starting with $\#$. Figure~\ref{ch2:fig:positionheapforkeywords} depicts an example of building the modified position heap $P(t_\mathcal{W})$ for a dictionary $\mathcal{W} =\{\omega_1, \omega_2, \omega_3\}$.

\subsubsection{$\mathtt{MPHSearch}$ Algorithm} \label{ch2:subsubsec:search}
Given a substring $s$ and a modified position heap $P(t_\mathcal{W})$, the $\mathtt{MPHSearch}$ algorithm follows the $\mathtt{PHSearch}$ algorithm (i.e., Algorithm~\ref{alg:Search}) to return all the keywords in $\mathcal{W}$ that include $s$. There are two differences between $\mathtt{PHSearch}$ and $\mathtt{MPHSearch}$: i) $\mathtt{PHSearch}$ returns a set of positions, but \\$\mathtt{MPHSearch}$ returns a set of keywords because all the $N.pos$ stored in $P(t_\mathcal{W})$ is replaced by the corresponding $N.keyword$. ii) $\mathtt{PHSearch}$ reviews each position $i \in L_1$ in the string $t$ to filter out unmatching positions, but $\mathtt{MPHSearch}$ directly returns all the keywords in $L_1$. The reason is that the cloud server, who performs $\mathtt{MPHSearch}$ algorithm, is not allowed to access to the dictionary string $t_\mathcal{W}$ to filter out unmatching keywords in $L_1$. Therefore, the cloud server returns all the keywords in $L_1$ and leave the filter operation to the data user. Fortunately, the computational cost of the filter operation, i.e., $O(|s|^2)$, is acceptable because the length of queried substring (i.e., $|s|$) is relatively small in practice.

\subsection{Description of Our Proposed Scheme}
In this subsection, we will describe our proposed privacy-preserving substring-of-keyword query scheme, which mainly consists of five phases: i) System Initialization; ii) Data Outsourcing; iii) Substring-of-keyword Query; iv) Update (Insertion); and v) Update (Deletion).

\subsubsection{System Initialization}
{Given a security parameter $\lambda$, the data user first initializes a secure pseudo-random function (PRF) $H_{k_1}: \{0,1\}^{*} \longrightarrow \{0,1\}^{\gamma}$, where $k_1$ is a $\lambda$-bit random key. Then, the data user initializes an IND-CPA secure SKE ${\varPi = (Gen, Enc, Dec)}$ and generates a secret key $k_2 = {\varPi.Gen(1^\lambda)}$.}

\subsubsection{Data Outsourcing} \label{ch2:subsubsec:outsourcing}
\begin{figure}[h]
	\centering
	\includegraphics[width=0.5\textwidth]{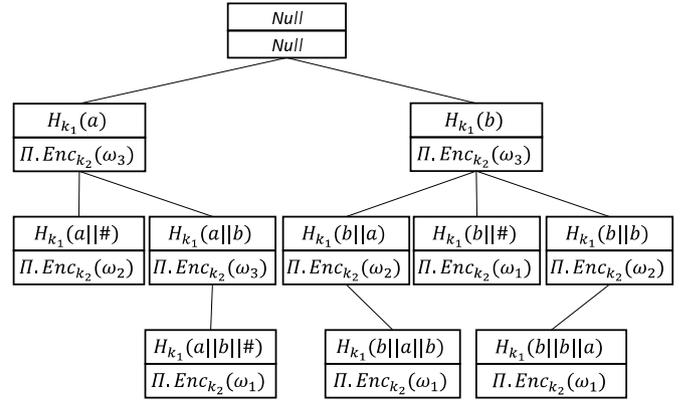}
	\captionsetup{justification=justified}
	\caption{An example of secure index $I_\mathcal{W}$, which is generated from the modified position heap $P(t_\mathcal{W})$ in Figure~\ref{ch2:fig:positionheapforkeywords}(d).}
	\label{ch2:fig:encryption_for_positionheap}
\end{figure}
Assume the data user has a file collection $\mathcal{F} = \{f_1, f_2, ...,$ $f_n\}$, where each $f_j \in \mathcal{F}$ includes a set of keywords $\mathcal{W}_j \subseteq \mathcal{W}$. The data user generates secure indices $\{I_\mathcal{W}, I_\mathcal{F}\}$ and a set of encrypted files in the following steps:

\textit{\textbf{Step 1:}} In order to support efficient substring-to-keyword query, the data user uses the $\mathtt{MPHBuild}$ algorithm, described in Section~\ref{ch2:subsec:modifiedpositionheap}, to build a modified position heap $P(t_\mathcal{W})$ for the dictionary $\mathcal{W}$.

\textit{\textbf{Step 2:}} For privacy, the data user encrypts $P(t_\mathcal{W})$ to a secure index $I_\mathcal{W}$ as follows (shown in Figure~\ref{ch2:fig:encryption_for_positionheap}):		
\begin{itemize}
	\item For each node $N$ in the modified position heap (except the root), the data user uses $\varPi.Enc_{k_2}$ to encrypt its $N.keyword$.
	\item For each node $N$ in the modified position heap (except the root), the data user concatenates each edge label, i.e., $N.edge$, along the path from the root to this node, and calculates the PRF output of the concatenation through $H_{k_1}$.
\end{itemize}
Considering the example in Figure~\ref{ch2:fig:encryption_for_positionheap}, the $I_\mathcal{W}$ is encrypted from Figure~\ref{ch2:fig:positionheapforkeywords}(d). {For each node, its keywords are encrypted through $\varPi.Enc_{k_2}$, and its edge label are transformed to a PRF output through $H_{k_1}$.}

\textit{\textbf{Step 3:}} In order to support efficient keyword-to-file query, the data user utilizes the inverted index proposed in \cite{Cash2014} to implement the index $I_\mathcal{F}$. This inverted index is implemented by a hash table, and each $<key, value>$ pair in it is the form of $<\omega, id>$, where $\omega$ is a keyword and $id$ is a file identifier.

\textit{\textbf{Step 4:}} Finally, the data user encrypts each file $f_j \in \mathcal{F}$ through $\varPi.Enc_{k_2}$ and sends these encrypted files to the cloud server with secure indices $\{I_\mathcal{W}, I_\mathcal{F}\}$.

\subsubsection{Substring-of-keyword Query} \label{ch2:subsubsec:query}
\begin{figure*}
	\centering
	\includegraphics[width=0.98\textwidth]{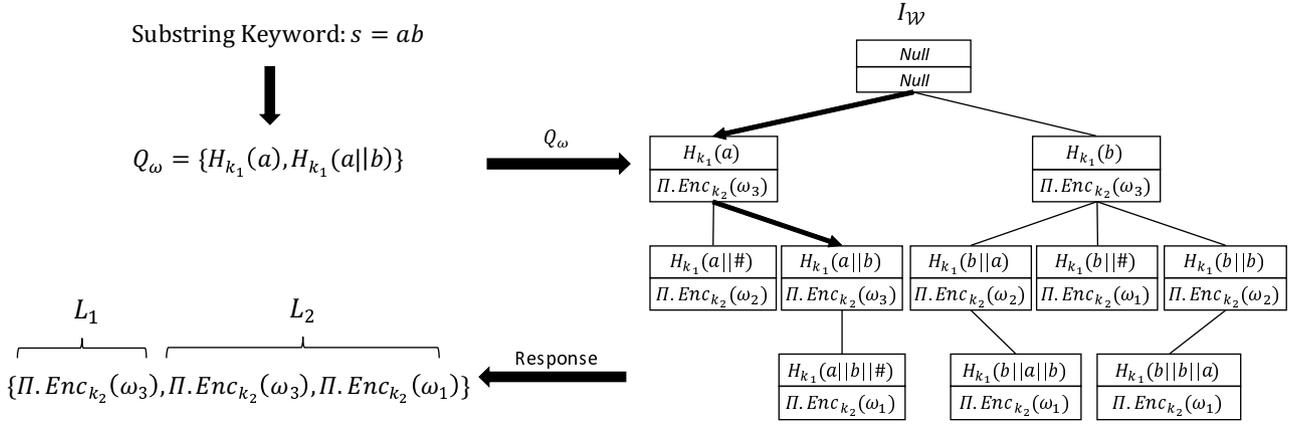}
	\captionsetup{justification=centering}
	\caption{An example of substring-to-keyword query, where the secure index $I_\mathcal{W}$ is for dictionary string $t_\mathcal{W}=bbab\#bba\#aba$ and the given substring is $s = ab$.}
	\label{ch2:fig:query}
\end{figure*}
Given a substring $s = s_1s_2...s_l$, the data user launches a substring-of-keyword query with the cloud server. The substring-of-keyword query consists of two consecutive phases: a substring-to-keyword query and a keyword-to-file query, which are described in the following steps:

\textit{\textbf{Step 1:}} First, the data user generates a {substring-to-keyword query} request $Q_\omega$ and submit it to the cloud server. To be more specific, the $Q_\omega = \{Q_1, Q_2, ..., Q_l\}$ consists of $l$ PRF outputs, where
\begin{eqnarray}
	Q_i = H_{k_1}(s_1||...||s_i), & 1 \leq i \leq l
\end{eqnarray}

\textit{\textbf{Step 2:}} After receiving the query request $Q_\omega$, the cloud server follows the $\mathtt{MPHSearch}$ algorithm, described in Section~\ref{ch2:subsec:modifiedpositionheap}, to search encrypted keywords in secure index $I_\mathcal{W}$ and returns elements in $L_1 \cup L_2$ to the data user. Figure~\ref{ch2:fig:query} depicts an example of {substring-to-keyword query}, where the given substring is $s=ab$. In this example, the data user generates $Q_\omega = \{H_{k_1}(a), H_{k_1}(a||b)\}$ and sends it to the cloud server. After receiving the $Q_\omega$, the cloud server performs $\mathtt{MPHSearch}$ to get {$L_1 = \{\varPi.Enc_{k_2}(\omega_3)\}$, $L_2 = \{\varPi.Enc_{k_2}(\omega_3), \varPi.Enc_{k_2}(\omega_1)\}$} and return $L_1 \cup L_2$ to the data user. Note that, since $\varPi.Enc$ is a randomized encryption, these encrypted keywords in $L_1 \cup L_2$ are indistinguishable for the cloud server.

\textit{\textbf{Step 3:}} After receiving the encrypted keywords, the data user first decrypts them and filters out the unmatching keywords. Then, the data user chooses a queried keyword from the matching keywords and submits a keyword-to-file query to the cloud server. Since our paper just focuses on the design of substring-to-keyword query, we directly utilize the scheme proposed in \cite{Cash2014} to implement our keyword-to-file query. According to the scheme in \cite{Cash2014}, the data user can submit efficient and privacy-preserving single keyword queries to the cloud server based on the index $I_\mathcal{F}$.

\subsubsection{Update (Insertion)} \label{ch2:subsubsec:Insertion}

\begin{figure*}
	\centering
	\includegraphics[width=0.98\textwidth]{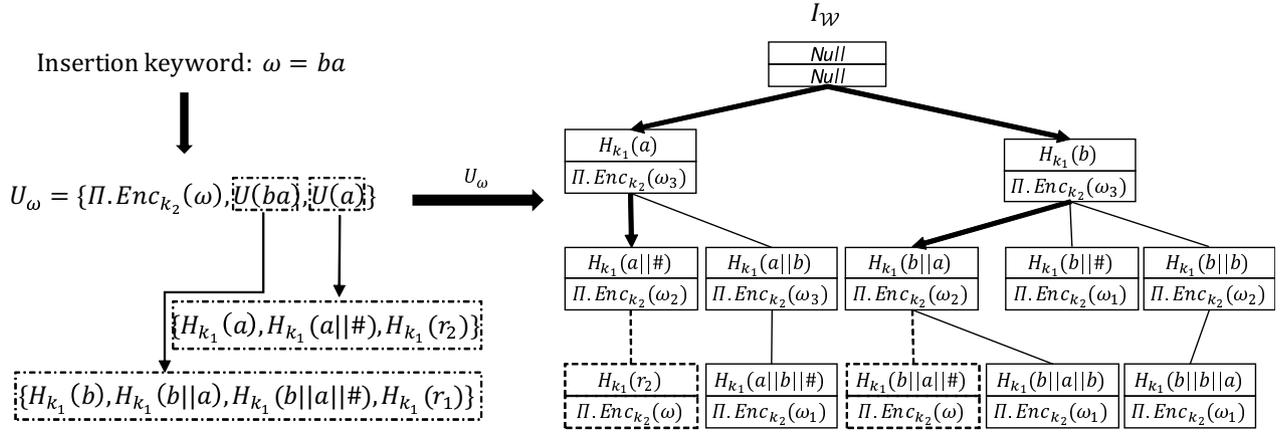}
	\captionsetup{justification=centering}
	\caption{An example of inserting keyword $\omega = ba$ to $I_\mathcal{W}$, which is the secure index for dictionary string $t_\mathcal{W}=bbab\#bba\#aba$.}
	\label{ch2:fig:Insertion}
\end{figure*}

In the proposed scheme, there are two types of insertion operations: insert {keywords} to the index $I_\mathcal{W}$ and insert {files} to the index $I_\mathcal{F}$. Since Cash et al.~\cite{Cash2014} has proposed a privacy-preserving insertion algorithm to deal with the insertion for the index $I_\mathcal{F}$, we just need to consider the insertion for the index $I_\mathcal{W}$.

Given a keyword $\omega = c_1c_2...c_z$, the data user is supposed to insert it to the index $I_\mathcal{W}$. Intuitively, assume the dictionary is $W = \{\omega_1, \omega_2, ..., \omega_d\}$ and its corresponding dictionary string is $t_{\mathcal{W}} = \omega_1 || \# || \omega_2 || \# ... \# || \omega_d$. The insertion operation will update the index $I_\mathcal{W}$ to a new version, called $I_\mathcal{W'}$, where its corresponding $t_\mathcal{W'} = \omega || \# || t_\mathcal{W}$. The details are described as follows:

\textit{\textbf{Step 1:}} The data user chooses $z$ random values $r_1,r_2,...,r_z$ to generate an update (insertion) request $U_\omega = \{\varPi.Enc_{k_2}(\omega),$ $U(c_1c_2...c_z), U(c_2...c_z), ..., U(c_z)\}$ and submit it to the cloud server. Specifically, each  $U(c_i...c_z)$ $(1 \le i \le z)$ in $U_\omega$ consists of $(z-i+3)$ PRF outputs, i.e., $\{U_i, U_{i+1}, ..., U_{z+2}\}$, where 
\begin{eqnarray}
	U_j = \left\{
	\begin{array}{ll}
		H_{k_1}(c_i||...||c_j),  & if\ i \le j \le z\\
		H_{k_1}(c_i||...||c_j||\#), & if\ j = z+1\\
		H_{k_1}(r_i), & if\ j = z+2
	\end{array}
	\right.
\end{eqnarray}
Figure~\ref{ch2:fig:Insertion} depicts an example of the insertion operation for the inserted keyword $\omega = ba$. In this example, the $U_\omega = \{\varPi.Enc_{k_2}(\omega), U(ba), U(a)\}$, where ${U(ba)} = \{H_{k_1}(b), H_{k_1}(b||a),$ $H_{k_1}(b||a||\#), H_{k_1}(r_1)\}$ and $U(a) = \{$ $H_{k_1}(a), H_{k_1}(a||\#), H_{k_1}(r_2)\}$.


\textit{\textbf{Step 2:}} After receiving the update (insertion) request $U_\omega$, for each $U(c_i...c_z)\ (1 \leq i \leq z)$ in it, the cloud server first finds its longest prefix $U_iU_{i+1}...U_h (1 \le h < z+2)$ that is already represented by a path in $I_\mathcal{W}$ and denotes this path as insertion path. Then the cloud server appends a new leaf child $N'$ to the last node of the insertion path, where $N'.edge = U_{h+1}$ and $N'.keyword = \varPi.Enc_{k_2}(\omega)$. Note that, in practice, the $h$ can not equal to $z+2$ because $U_{z+2} = H_{k_1}(r_i)$ is a random number. As shown in Figure~\ref{ch2:fig:Insertion}, the solid edges reflect the insertion paths for the $U(ba)$ and $U(a)$. 


\subsubsection{Update (Deletion)}
In the proposed scheme, there are two types of deletion operations: delete substrings from the index $I_\mathcal{F}$ and delete keywords from the index $I_\mathcal{W}$. Since Cash et al.~\cite{Cash2014} has proposed a privacy-preserving deletion algorithm for the index $I_\mathcal{F}$, we just need to consider the deletion for the index $I_\mathcal{W}$.

We implement this deletion operation by maintaining a revocation list $I_{\mathcal{W}_r}$, which is also an encrypted modified position heap, in the cloud server. Specifically, in the data outsourcing phase, the data user build a modified position heap $I_{\mathcal{W}_r}$ for an empty dictionary $\mathcal{W}_r = \{\}$ and sends the $I_{\mathcal{W}_r}$ to the cloud server with $\{I_\mathcal{W}, I_\mathcal{F}\}$. Then, to delete a keyword from the cloud server, the data user just follows the update (insertion) method in \ref{ch2:subsubsec:Insertion} to insert the keyword to the $I_{\mathcal{W}_r}$. During a substring-to-keyword query, after receiveing a substring-to-keyword query request, the cloud server performs search operations over $I_\mathcal{W}$ and $I_{\mathcal{W}_r}$ separately, and returns two result sets to the data user. Finally, the data user decrypts the two result sets and calculates the difference between them to obtain the correct keywords.

{\textbf{Correctness.} The correctness of our proposed is quite straightforward. The only issue is the collision among the edges' PRF outputs in $I_\mathcal{W}$. Since the domain size of PRF $H_{k_1}$ is $2^\gamma$, assuming that the number of nodes in $I_\mathcal{W}$ is $m$, the probability of collision is $O(\left[m \atop 2\right]/2^\gamma) = O(m^2/2^\gamma)$. So we need to choose $\gamma= \lambda + 2log(m)$ such that $O(m^2/2^\gamma) = O(1/2^\lambda)$ is negligible over the security parameter $\lambda$.}

\section{Security Analysis}
\label{ch2:sec:security}
In this paper, the proposed substring-of-keyword query scheme consists of two query schemes: a substring-to-keyword query scheme and a keyword-to-file query scheme. Since the security analysis in \cite{Cash2014} has shown that the keyword-to-file query scheme is secure, we mainly focus on the security analysis of the substring-to-keyword query scheme in this section.

\subsection{Leakage Function Collection}
{
	The leakage function collection $\mathcal{L}$ consists of three leakage functions: $\mathcal{L}_O$, $\mathcal{L}_Q$, and $\mathcal{L}_U$. Before defining them, we first give some definitions for the leakage of this scheme.
	
	\begin{Definition} \textbf{(Access Pattern)}
		Given the index $I_\mathcal{W}$, which contains a set of encrypted nodes $\{n_1, n_2, ..., n_{m}\}$, and a query request $Q_\omega$, the path pattern reveals the set of identifiers of nodes in $I_\mathcal{W}$ that are returned to the data user.
	\end{Definition}
	
	\begin{Definition} \textbf{(Query Path Pattern)}
		Given the index $I_\mathcal{W}$, which contains a set of encrypted nodes $\{n_1, n_2, ..., n_{m}\}$, and a query request $Q_\omega$, the query path pattern reveals the set of identifiers of nodes in $I_\mathcal{W}$ that are reached by the $Q_\omega$, i.e., nodes in the search path.
	\end{Definition}
	
	\begin{Definition} \textbf{(Insertion Path Pattern)}
		Given the index $I_\mathcal{W}$, which contains a set of encrypted nodes $\{n_1, n_2, ..., n_{m}\}$, and an update (insertion) request $U_\omega$, the insertion path pattern reveals the set of identifiers of nodes in $I_\mathcal{W}$ that are reached by the $U_\omega$, i.e., nodes in the insertion path.
	\end{Definition}
	
	\begin{Definition} \textbf{(Deletion Path Pattern)}
		The deletion method is implemented by a revocation list, which means the update (deletion) request is exactly the same as the update (insertion) request. Therefore, given the revocation list $I_{\mathcal{W}_r}$, which contains a set of encrypted nodes $\{n_1, n_2, ..., n_{m}\}$, and an update (deletion) request $U_\omega$, the deletion path pattern reveals the set of identifiers of nodes in $I_{\mathcal{W}_r}$ that are reached by the $U_\omega$.
	\end{Definition}
	
	Now we define the leakage functions to capture the information leakage in different phases.
}	
\subsubsection{Outsourcing Phase}
Given the index $I_\mathcal{W}$, which contains a set of encrypted nodes $\{n_1, n_2, ..., n_{m}\}$. The leakage $\mathcal{L}_O$ consists of the following information:
\begin{itemize}
	\item $m$: the size of the dictionary string $t_{\mathcal{W}}$.
	\item $\varGamma=\{(id_1, C_{id_1}), ..., (id_m, C_{id_m})\}$: the structure of index $I_\mathcal{W}$, where $id_i (1 \le i \le m)$ denotes the identifiers of encrypted node $n_i$ and $C_{id_i} (1 \le i \le m)$ denotes all the identifiers of $id_i$'s children.
\end{itemize}

\subsubsection{Query Phase}
Given the index $I_\mathcal{W}$ and a substring-to-keyword query request $Q_\omega$, the leakage $\mathcal{L}_Q$ consists of two parts: $Access\ Pattern$ and $Query\ Path\ Pattern$.

\subsubsection{Update Phase}
Given the index $I_\mathcal{W}$, revocation list $I_{\mathcal{W}_r}$, and an update request $U_\omega$, if update operation is insertion / deletion, the leakage $\mathcal{L}_U$ is $Insertion\ Path\ Pattern$ / $Deletion\ Path\ Pattern$.

\subsection{Security Proof}
We now prove the security of the substring-to-keyword query scheme based on the leakage function collection $\mathcal{L} = \{\mathcal{L}_O, \mathcal{L}_Q, \mathcal{L}_U\}$. Intuitively, we first define a simulator $\mathcal{S}$ based on the leakage function collection $\mathcal{L}$ and then analyze the indistinguishability between the output of the $\mathcal{S}$ in the ideal world and the challenger $\mathcal{C}$ (i.e., the data user) in the real world. Finally, we conclude that the proposed substring-to-keyword query scheme does not reveal any information beyond the leakage function collection $\mathcal{L}$ to the server. The details are as follows.
\begin{Theorem}
	{If the $H$ is a secure pseudo-random function (PRF) and $\varPi$ is an IND-CPA secure symmetric key encryption scheme (SKE), then our proposed scheme is $\mathcal{L}$-adaptively-secure.}
\end{Theorem}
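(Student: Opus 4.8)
The plan is to follow the standard simulation recipe: I would first exhibit an efficient simulator $\mathcal{S}=(\mathcal{S}_O,\mathcal{S}_Q,\mathcal{S}_U)$ that acts only on the leakage collection $\mathcal{L}=(\mathcal{L}_O,\mathcal{L}_Q,\mathcal{L}_U)$, and then pass from $Real_{\mathcal{A},\mathcal{C}}$ to $Ideal_{\mathcal{A},\mathcal{S},\mathcal{L}}$ through two game hops, one charged to the PRF security of $H$ and one to the IND-CPA security of $\varPi$ (the keyword-to-file component being argued by composition with the already-secure scheme of \cite{Cash2014}). For the simulator: given $\mathcal{L}_O=(m,\varGamma)$, $\mathcal{S}_O$ allocates $m$ node identifiers, wires the parent--child relation exactly as prescribed by $\varGamma$, draws for every non-root node a fresh uniform string in $\{0,1\}^{\gamma}$ as the simulated $edge$ value, sets the simulated $keyword$ field of every node to $\varPi.Enc_{k_2'}(0^{\ell})$ under a freshly sampled key $k_2'$ (with $\ell$ the public ciphertext length), and records in a table $T$ the random label assigned to each node's incoming edge. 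For the $t$-th query, $\mathcal{S}_Q$ reads the query path pattern --- the ordered list of reached nodes $n_{j_1},\dots,n_{j_h}$ --- and outputs $Q_i=T[n_{j_i}]$ for $i\le h$, together with fresh uniform $\gamma$-bit strings for the remaining $l-h$ coordinates that ``run off'' the index, re-using a previously emitted value whenever the current request shares that coordinate's path-pattern prefix with an earlier one; the simulated server then returns exactly the encrypted nodes named by the access pattern. $\mathcal{S}_U$ behaves analogously, driven by the insertion (resp.\ deletion) path pattern: it reuses the stored labels along the path, emits fresh random labels for the new edges, fills the new leaf's $keyword$ field with a dummy ciphertext, and simulates the seed coordinates $U_{z+2}=H_{k_1}(r_i)$ by fresh uniform strings.

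For the hybrids, let $G_0=Real_{\mathcal{A},\mathcal{C}}(\lambda)$. In $G_1$ every invocation of $H_{k_1}$ is replaced by lazy evaluation of a truly random function $f:\{0,1\}^{*}\to\{0,1\}^{\gamma}$; a distinguisher between $G_0$ and $G_1$ immediately yields a PRF distinguisher $\mathcal{B}_1$ for $H$ that forwards its oracle in place of $H_{k_1}$, so $|\Pr[G_0\to 1]-\Pr[G_1\to 1]|\le \mathsf{Adv}^{\mathrm{prf}}_{H}(\mathcal{B}_1)$. In $G_2$ every application of $\varPi.Enc_{k_2}$ to a real keyword or file is replaced by $\varPi.Enc_{k_2}(0^{\ell})$; a ciphertext-by-ciphertext sub-hybrid together with a reduction $\mathcal{B}_2$ to IND-CPA gives $|\Pr[G_1\to 1]-\Pr[G_2\to 1]|\le q\cdot\mathsf{Adv}^{\text{ind-cpa}}_{\varPi}(\mathcal{B}_2)$, where $q$ is the total number of ciphertexts produced across outsourcing, queries, and updates. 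I would then argue that $G_2$ is identically distributed to $Ideal_{\mathcal{A},\mathcal{S},\mathcal{L}}(\lambda)$: in $G_2$ the edge values are outputs of a random function on the queried and inserted prefixes, hence uniform and mutually independent except that two coordinates collide precisely when their underlying prefixes coincide, and the path-pattern leakage exposes exactly these coincidences as shared reached nodes; the keyword and file fields are dummy ciphertexts in both games (encryptions of the fixed message $0^{\ell}$ under a uniform key are identically distributed regardless of the key), so $\mathcal{S}$ reproduces the $G_2$ distribution verbatim. Summing the two bounds yields the theorem with $negl(\lambda)=\mathsf{Adv}^{\mathrm{prf}}_{H}(\mathcal{B}_1)+q\cdot\mathsf{Adv}^{\text{ind-cpa}}_{\varPi}(\mathcal{B}_2)$.

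The technical heart --- and the step I expect to be the main obstacle --- is this last equality of distributions in the adaptive regime: I must show that the node-coincidence information in the query, insertion, and deletion path patterns is exactly what $\mathcal{S}$ needs to reproduce the joint law of all PRF-derived tokens, i.e.\ that two token coordinates (within or across requests) are equal in $G_1$ if and only if the leakage forces $\mathcal{S}$ to make them equal in $G_2$. The ``if'' direction is immediate from the definition of $\mathcal{S}$; the ``only if'' direction relies on $f$ being injective on the polynomially many inputs that actually occur except with probability $O(m^2/2^{\gamma})$, which is negligible by the choice $\gamma=\lambda+2\log m$ recorded in the correctness argument, so with overwhelming probability distinct prefixes give distinct labels and the subtree the real server traverses is isomorphic to the one the simulated server traverses. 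The coordinates requiring the most care are those of a query or update that leave the current index (and the random-seed coordinates $U_{z+2}$): the simulator never sees the underlying characters, so it must detect repeats purely from path-pattern prefixes --- but since in $G_1$ those coordinates are themselves fresh uniform values whenever the underlying prefix is new, consistent lazy sampling keyed on the path-pattern prefix reproduces the real distribution. The revocation-list index $I_{\mathcal{W}_r}$ needs no separate treatment, since an update (deletion) token has the same syntactic form as an update (insertion) token and is simulated the same way.
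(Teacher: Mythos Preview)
Your proposal follows the same simulation paradigm as the paper --- exhibit $\mathcal{S}$ from the leakage, then argue indistinguishability via the PRF and IND-CPA assumptions --- but you make the argument considerably more precise by inserting explicit hybrid hops $G_0\to G_1\to G_2$ with concrete advantage bounds, whereas the paper simply constructs the simulator and asserts indistinguishability ``since the output of $H$ and $\varPi.Enc$ are pseudo-random.'' That is an improvement in rigor, not a change of approach.

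One point you handle more carefully than the paper, and where a residual gap remains: the paper's simulator outputs only the $\overline{H}$ values stored along the search path as $\overline{Q_\omega}$, silently omitting the $l-h$ tokens for prefixes that run off the tree. You rightly simulate those coordinates too, but your proposed rule --- reuse keyed on the ``path-pattern prefix'' --- cannot be implemented from $\mathcal{L}_Q$ as the paper defines it. Two queries with identical on-tree paths but different off-tree continuations (e.g.\ querying $abc$ twice versus querying $abc$ then $abd$, when only the nodes for $a$ and $ab$ exist in $I_\mathcal{W}$) produce identical leakage yet different real-world equality patterns at coordinate $3$, so no simulator given only the reached-node identifiers can decide whether to repeat or refresh that token. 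This is really a deficiency in the paper's leakage specification (it is missing a search-pattern-style equality predicate on off-tree prefixes, and analogously for updates) rather than a flaw in your proof strategy; the standard repair is to augment $\mathcal{L}_Q$ and $\mathcal{L}_U$ with that equality information, after which your hybrid argument goes through exactly as written.
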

\begin{proof}
	Based on the leakage function collection $\mathcal{L}$, we can build a simulator $\mathcal{S}$ as follows:
	\begin{itemize}
		\item Data Outsourcing: given the leakage function $\mathcal{L}_O = \{m, \varGamma\}$, the simulator $\mathcal{S}$ is supposed to generate a simulated $\overline{I_\mathcal{W}}$ (i.e., an encrypted modified position heap). Specifically, the simulator $\mathcal{S}$ first generates $m$ empty nodes and identifies each node a unique identifier from $\{id_1, ..., id_m\}$. Then the simulator $\mathcal{S}$ constructs these nodes to a tree (i.e., $\overline{I_\mathcal{W}}$) based on $\varGamma$, which means the $\overline{I_\mathcal{W}}$ has the same tree structure as $I_\mathcal{W}$. Next, for each node in the $\overline{I_\mathcal{W}}$, the simulator $\mathcal{S}$ chooses a random number $\overline{H}$ from the domain of $H$ as the PRF output of its edge and a random number $\overline{\varPi.Enc}$ from the domain of $\varPi.Enc$ as its encrypted keyword. {Since the output of $H$ and $\varPi.Enc$ are pseudo-random,} the adversary $\mathcal{A}$ cannot distinguish between the $\overline{I_\mathcal{W}}$ in the ideal world and the $I_\mathcal{W}$ in the real world.
		
		\item {Substring-to-keyword Query}: given the leakage function $\mathcal{L}_Q$ for a substring-to-keyword query request $Q_\omega$, the simulator $\mathcal{S}$ is supposed to generate a simulated encrypted substring-to-keyword query request $\overline{Q_\omega}$. Note that, in this phase, the simulator $\mathcal{S}$ not only has $\mathcal{L}_Q$ but also $\mathcal{L}_O$ and $\overline{I_\mathcal{W}}$ from the data outsourcing phase. Therefore, the simulator $\mathcal{S}$ can follow the $query\ path\ pattern$ in $\mathcal{L}_Q$ to find the search path in $\overline{I_\mathcal{W}}$ and output all the $\overline{H}$ stored in the search path as the $\overline{Q_\omega}$. {Since the output of $H$ is pseudo-random,} the adversary $\mathcal{A}$ cannot distinguish between the elements in $\overline{Q_\omega}$ and ${Q_\omega}$. At the same time, after receiving the $\overline{Q_\omega}$, the adversary $\mathcal{A}$ can use it to find matching encrypted keywords in $\overline{I_\mathcal{W}}$. Since these matching encrypted keywords in $\overline{I_\mathcal{W}}$ is encrypted through $\varPi.Enc$, the adversary $\mathcal{A}$ cannot distinguish them from the matching encrypted keywords in ${I_\mathcal{W}}$, which means the adversary $\mathcal{A}$ cannot distinguish between $\overline{Q_\omega}$ in the ideal world and $Q_\omega$ in the real world.
		
		\item Update: given the leakage function $\mathcal{L}_U$ for an update (insertion / deletion) request $U_\omega$, the simulator $\mathcal{S}$ is supposed to generate a simulated encrypted update request $\overline{U_\omega}$. Note that, in this phase, the simulator $\mathcal{S}$ not only has $\mathcal{L}_U$ but also $\mathcal{L}_O$ and $\overline{I_\mathcal{W}}$ / $\overline{I_{\mathcal{W}_r}}$ from the data outsourcing phase. Therefore, the simulator $\mathcal{S}$ can follow the $insertion\ path\ pattern$ / $deletion\ path\ pattern$ in $\mathcal{L}_U$ to find the insertion paths in $\overline{I_\mathcal{W}}$ / $\overline{I_{\mathcal{W}_r}}$ and output all the $\overline{H}$ stored in these insertion paths as the $\overline{U_\omega}$. {Since the output of $H$ is pseudo-random,} the adversary $\mathcal{A}$ cannot distinguish between $\overline{U_\omega}$ in the ideal world and ${U_\omega}$ in the real world.		
	\end{itemize}
	
	In summary, as the adversary $\mathcal{A}$ cannot distinguish between the outputs from the simulator $\mathcal{S}$ and the challenger $\mathcal{C}$, we can conclude that our proposed substring-to-keyword query scheme is $\mathcal{L}$-adaptively-secure.
\end{proof}

\section{Performance Evaluation}
\label{ch2:sec:performance}
In this section, we evaluate the performance of our proposed scheme from both theoretical and experimental perspectives.
\subsection{Theoretical Analysis}
\begin{table}[t]
	\centering
	\caption{Comparison between ours and existing schemes} \label{ch2:tab:comparison}
	\begin{tabular}{|c|c|c|c|c|c|c|}
		\hline Scheme & Index Space & Query Time & Dynamism \\
		\hline \cite{chase2015} & $O(m)$  & $O(|s| + d_s)$ & static \\
		\hline \cite{leontiadis2018storage} & $O(m)$ & $O(|s| + d_s)$ & static \\
		\hline \cite{hahn2018practical} & $O(m)$ & $O(|s| \cdot d_{\overline{kg}})$ & dynamic \\
		\hline \cite{moataz2018substring} & $O(m)$ & $O(m)$ & static \\
		\hline \cite{mainardi2019privacy} & $O(|\Sigma| \cdot m)$ & $O(|s| + d_s)$ & static \\
		\hline Our solution & $O(m)$ & $O(|s| + d_s)$ & dynamic \\
		\hline
	\end{tabular}
	\begin{tablenotes}
		\item $m$ is the size of dataset, $|s|$ is the size of queried substring $s$, $d_s$ is the number of matching positions for $s$, $d_{\overline{kg}}$ is the average number of matching positions for a k-gram of $s$, and $|\Sigma|$ is the number of distinct characters in dataset.
	\end{tablenotes}
\end{table}
	{We perform a theoretical comparison of our proposed substring-to-keyword scheme with existing schemes (cf. Table~\ref{ch2:tab:comparison}) from three aspects: index space, query time, and dynamism. From Table~\ref{ch2:tab:comparison}, we can see that the schemes in \cite{chase2015} and \cite{leontiadis2018storage} have the same index space (i.e., $O(m)$) and query time (i.e, $O(|s| + d_s)$). However, in practice, \cite{chase2015} will consume more index space than \cite{leontiadis2018storage} due to its suffix tree index, which just stores position data in leaf nodes and does not utilize the space of inner nodes effectively. In fact, the number of nodes in the suffix tree can be up to $2m$, where $m$ is the size of the dataset. In contrast, \cite{leontiadis2018storage} utilizes Burrows-Wheeler Transformation (BWT) to build a suffix array index to support substring query, which has better storage-efficiency than the suffix tree at the cost of worse query-efficiency.
	Later, based on the scheme in \cite{leontiadis2018storage} , \cite{mainardi2019privacy} uses Private Information Retrieval (PIR) technique to protect the access pattern, which causes high index space and query time. 
	In addition to suffix tree and suffix array, there are other auxiliary data structures~\cite{hahn2018practical, moataz2018substring} can be used to support substring query. However, their query time is unacceptable in practice.
	
	Compared with these existing schemes, our proposed \\substring-to-keyword scheme can achieve high storage-efficiency and query-efficiency at the same time. In specific, our scheme can achieve $O(m)$ complexity for index space and $O(|s| + d_s)$ complexity for query time, which are the same as \cite{chase2015, leontiadis2018storage} and better than \cite{hahn2018practical, moataz2018substring, mainardi2019privacy}. In addition, our proposed scheme can support dynamic datasets, which cannot be supported by \cite{chase2015, leontiadis2018storage}. Further, due to the use of position heap technique, which is storage-efficient than the suffix tree and query-efficient than the suffix array, our proposed scheme consumes less index space than \cite{chase2015} and less query time than \cite{leontiadis2018storage} in practice, which will be shown in the next subsection.
}

\subsection{Experimental Analysis}

In this subsection, we evaluate the computational cost and storage overhead of the proposed substring-to-keyword scheme in terms of three phases: local data outsourcing, substring-to-keyword query, and update. Specifically, we implemented the proposed scheme in C++ (our code is open source~\cite{Yin19}) and {conducted experiments on a 64-bit machine with an Intel Core i5-8400 CPU at 2.8GHZ and 2GB RAM, running CentOS 6.6}. We utilized the OpenSSL library for the entailed cryptographic operations, where the $H$ and $\varPi$ are instantiated using HMAC-SHA-256 and AES-512-CBC in the OpenSSL library, respectively. Note that, we implemented the data user and the cloud server on the same machine, which means there is no network delay between them. The underlying dataset (i.e., the dictionary $\mathcal{W}$) in our experiment was extracted from 29,378 articles from Wikivoyage~\cite{wiki}, and it contains 40,205 distinct keywords in total. The length distribution of the keywords in $\mathcal{W}$ can be found in Figure~\ref{ch2:fig:KeywordLengthDistribution}.

\begin{figure}[h]
	\centering
	\includegraphics[width=0.98\linewidth]{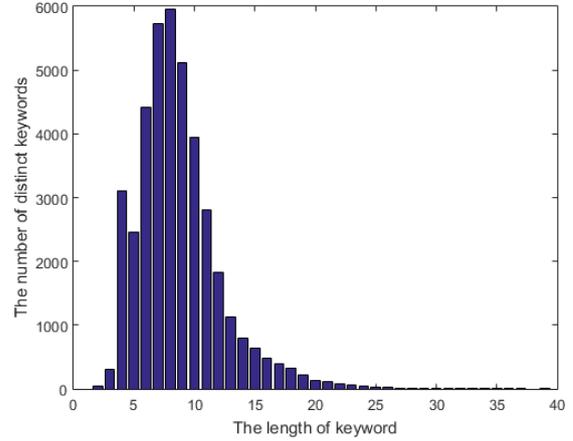}
	\captionsetup{justification=justified}
	\caption{The length distribution of a total of  40,205 distinct keywords in $\mathcal{W}$.}
	\label{ch2:fig:KeywordLengthDistribution}
\end{figure}

{
	In order to show the efficiency of our proposed substring-to-keyword scheme, we compare it with the schemes in \cite{chase2015, leontiadis2018storage}. Note that, in our experiment, we also use the schemes in \cite{chase2015, leontiadis2018storage} to support substring query on the dictionary string $t_\mathcal{W}$, which is transformed from the dictionary $\mathcal{W}$ by the method in Figure~\ref{ch2:fig:positionheapforkeywords}(a-b).}

\subsubsection{Data Outsourcing}
In this part, we consider the storage overhead and computational cost of data outsourcing phase.

{
	In general, given a dictionary string, our solution generates an encrypted position heap, \cite{chase2015} generates an encrypted suffix tree, and \cite{leontiadis2018storage} generates an encrypted suffix array as the index. Figure~\ref{ch2:fig:OutsourceStorage_filesize} and Figure~\ref{ch2:fig:OutsourcingTime_filesize} (the y-axis is log scale) depict the storage overhead and the runtime versus the size of dictionary (i.e., $m$) respectively, where $m$ varies from 5000 to 40000 keywords. The figures show that \cite{chase2015} consumes much more storage overhead and computation cost than \cite{leontiadis2018storage} and our solution in data outsourcing phase.}

\begin{figure}[h]
	\centering
	\includegraphics[width=0.98\linewidth]{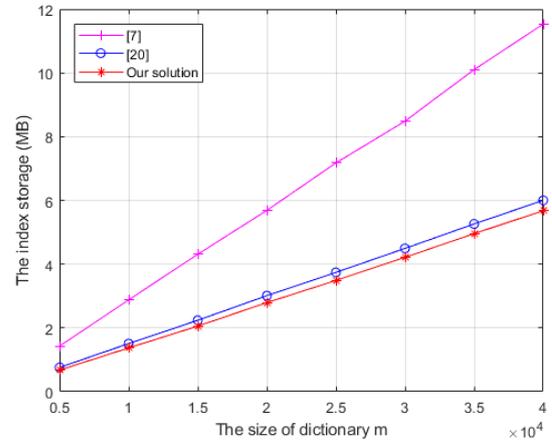}
	\captionsetup{justification=justified}
	\caption{The storage overhead of the data outsourcing versus the size of dictionary $m$.}
	\label{ch2:fig:OutsourceStorage_filesize}
\end{figure}

\begin{figure}[h]
	\centering
	\includegraphics[width=0.98\linewidth]{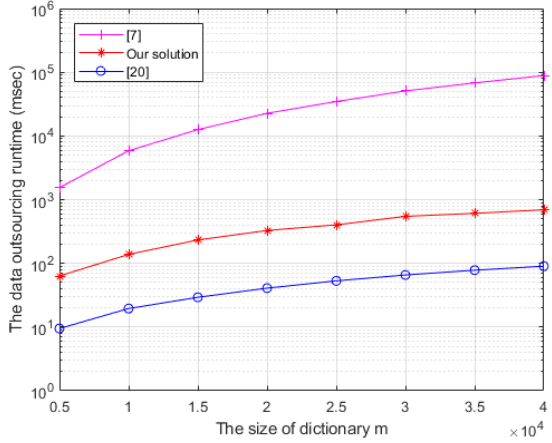}
	\captionsetup{justification=justified}
	\caption{The data outsourcing runtime versus the size of dictionary $m$.}
	\label{ch2:fig:OutsourcingTime_filesize}
\end{figure}

\subsubsection{Substring-to-keyword Query}
{In this part, we randomly choose queried substrings from the dictionary $\mathcal{W}$ and calculate their average queried time. Since the computational cost of substring-to-keyword query is limited by two factors: the size of dictionary (i.e., $m$) and the number of matching keywords (i.e., $d_s$), we analyze them separately in the following.}

\begin{figure}[h]
	\centering
	\includegraphics[width=0.98\linewidth]{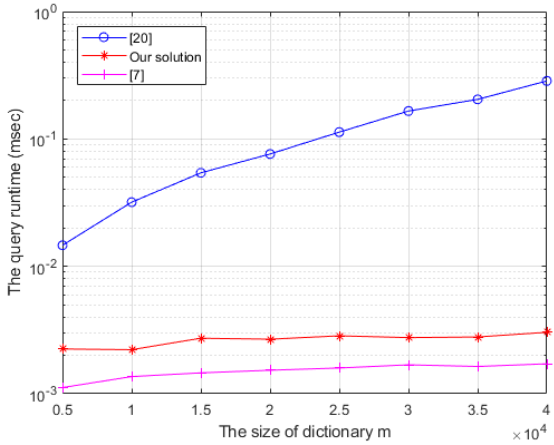}
	\captionsetup{justification=justified}
	\caption{{Substring-to-keyword query} runtime versus the size of dictionary $m$, where the number of matching keywords $d_s$ is 5.}
	\label{ch2:fig:QueryTime_filesize}
\end{figure}
\begin{figure}[h]
	\centering
	\includegraphics[width=0.98\linewidth]{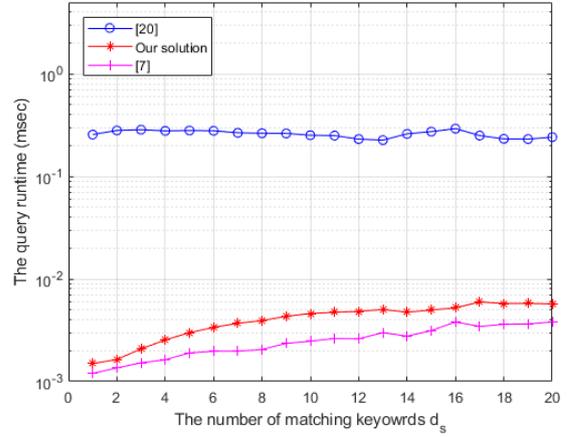}
	\captionsetup{justification=justified}
	\caption{{Substring-to-keyword query} runtime versus the number of matching keywords $d_s$, where $m = 40000$.}
	\label{ch2:fig:QueryTime_MatchingKeywordSize}
\end{figure}

{Figure~\ref{ch2:fig:QueryTime_filesize} (the y-axis is log scale) depicts the computational cost of the substring-to-keyword query versus the size of dictionary (i.e., $m$). This figure shows that the computational cost of our solution and \cite{chase2015} are not affected by $m$ when the number of matching keywords $d_s$ is fixed. However, the computational cost of \cite{leontiadis2018storage} increases linear with $m$ even if $d_s$ is fixed.
	
	Figure~\ref{ch2:fig:QueryTime_MatchingKeywordSize} (the y-axis is log scale) plots the runtime of the {substring-to-keyword query} versus the number of matching keywords $d_s$, in which $m$ is fixed to 40000. From this figure, we can see that the computational cost of three schemes are not affected too much by $d_s$. Meanwhile, our solution and \cite{chase2015} are significantly quicker than \cite{leontiadis2018storage}. For example, when $d_s = 20$, the computational cost of our solution and \cite{chase2015} are both about 0.004 ms, which is just about $1/60$ compared to \cite{leontiadis2018storage}.}

\subsubsection{Update}
{In this part, we consider the update (insertion / deletion) phase. Since there is no secure update method in \cite{chase2015, leontiadis2018storage}, we only test the update performance of our solution. Meanwhile, because the deletion operation in our solution is the same as the insertion operation, we just evaluate the computational cost of the insertion operation.}

\begin{figure}[h]
	\centering
	\includegraphics[width=0.98\linewidth]{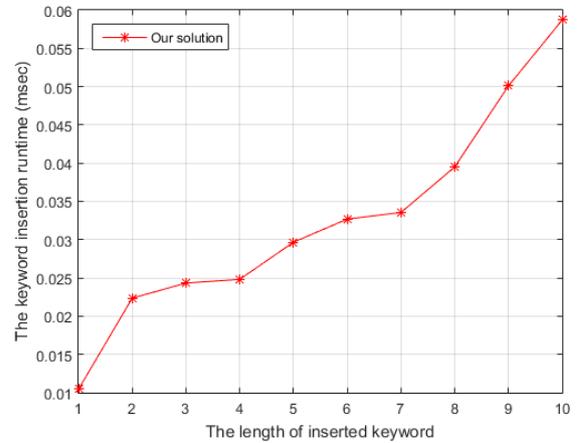}
	\captionsetup{justification=justified}
	\caption{Insertion runtime versus the size of inserted keyword, where the size of original dictionary $m$ is 5000.}
	\label{ch2:fig:InsertTime_keywordlen}
\end{figure}

{Figure~\ref{ch2:fig:InsertTime_keywordlen} plots the computational cost of the insertion versus the size of inserted keyword, in which the size of original dictionary $m$ is fixed to 5000. From this figure, we can see that the computational cost of our solution increase linearly with the size of inserted keyword.}

\section{Related Work}
\label{ch2:sec:related}
A searchable encryption scheme can be realized with optimal security via powerful cryptographic tools, such as Fully Homomorphic Encryption (FHE) \cite{Gentry09, Gentry10} and Oblivious Random Access Memory (ORAM) \cite{Ostrovsky90, GoldreichO96}. However, these tools are extraordinarily impractical. Another set of works utilize property-preserving encryption (PPE) \cite{BellareBO07, BoldyrevaCLO09, BoldyrevaCO11, YangXNSH18} to achieve searchable encryption, which encrypts messages in a way that inevitably leaks certain properties of the underlying message. For balancing the leakage and efficiency, many studies focus on Searchable Symmetric Encryption (SSE). Song et al. \cite{SongWP00} first used the symmetric encryption to facilitate keyword query over the encrypted data. Then, Curtmola et al. \cite{CurtmolaGKO06} gave a formal definition of SSE, and proposed an efficient SSE scheme. {However, their scheme cannot support update(insertion/deletion) operation. Later, Kamara et al. \cite{kamara2012} proposed the first dynamic SSE scheme, which uses a deletion array and a homomorphic encrypted pointer technique to securely update files. Unfortunately, due to the use of fully homomorphic encryption, the update efficiency is very low. In a more recent paper \cite{Cash2014}, Cash et al. described a simple dynamic inverted index based on \cite{CurtmolaGKO06}, which utilizes the data unlinkability of hash table to achieve secure insertion. Meanwhile, to prevent the file-injection attacks \cite{ZhangKP16}, many works \cite{bost2016, kim2017, zuo2018dynamic, zuo2019} focused on the forward security, which ensures that newly updated keywords cannot be related to previous queried results.
	
Nevertheless, these above works only can support the exact keyword query. If the queried keyword does not match a preset keyword, the query will fail. Fortunately, fuzzy query can deal with this problem as it can tolerate minor typos and formatting inconsistencies. Li et al. \cite{li2010} first proposed a fuzzy query scheme, which used an edit distance with a wildcard-based technique to construct fuzzy keyword sets. For instance, the set of $CAT$ with 1 edit distance is $\{CAT, *CAT, *AT, C*AT, C*T, CA*T, CA*, CAT*\}$. Then, Kuzu et al. \cite{kuzu2012} used LSH (Local Sensitive Hash) and Bloom filter to construct a similarity query scheme. Because an honest-but-curious server may only return a fraction of the results, Wang et al. \cite{wang2013} proposed a verifiable fuzzy query scheme that not only supports fuzzy query service, but also provides proof to verify whether the server returns all the queried results. However, these fuzzy query schemes only support single fuzzy keyword query and address problems of minor typos and formatting inconsistency, which can not be directly used to achieve substring-of-keyword query.
	
{In \cite{chase2015}, Melissa et al. designed a SSE scheme based on the suffix tree to support substring query. Although this scheme can be used to implement the substring-of-keyword query and allows for substring query in $O(|s| + d_s)$ time, its storage cost $O(m)$ has a big constant factor. The reason is that suffix tree only stores position data in leaf nodes and does not utilize the space of inner nodes effectively. This leads the number of nodes in suffix tree can be up to $2m$, where $m$ is the size of the dictionary. In order to reduce the storage cost as much as possible, Leontiadis et al.~\cite{leontiadis2018storage} leveraged Burrows Wheeler Transform (BWT) to build an auxiliary data structure called suffix array, which can achieve storage cost $O(m)$ with a lower constant factor. However, its query time is relatively large. Later, Mainardi et al.~\cite{mainardi2019privacy} optimizes the query algorithm in \cite{leontiadis2018storage} to achieve $O(|s| + d_s)$ at the cost of higher index space, i.e., $O(|\Sigma| \cdot m)$, where $|\Sigma|$ is the number of distinct characters in the dictionary. Although authors in this article considered datasets with small $|\Sigma|$ (e.g., DNA dataset), the $|\Sigma|$ can be large in practice.	In addition to suffix tree and suffix array, there are other auxiliary data structures can be used to support substring query.
In 2018, Florian et al.~\cite{hahn2018practical} designed an index consisting based on k-grams. When a user needs to perform a substring query, the cloud performs a conjunctive keyword query for all the k-grams of the queried substring. However, its query time is relatively large due to the computational cost of intersection operations in the conjunctive keyword query. In the same year, Tarik et al.~\cite{moataz2018substring} proposed a new substring query scheme based on the idea of letter orthogonalization, which allows testing of string membership by performing efficient inner product. Again, the disadvantage of this scheme comes its $O(m)$ query time. }

\section{Conclusion}
\label{ch2:sec:conclusions}
In this paper, we have proposed an efficient and privacy-preserving substring-of-keyword query scheme over cloud. Specifically, based on the position heap technique, we first designed a tree-based index to support substring-to-keyword query and then applied a PRF and a SKE to protect its privacy. After that, we proposed a novel substring-of-keyword query scheme, which contains two consecutive phases: a substring-to-keyword query that queries the keywords matching a given substring, and a keyword-to-file query that queries the files matching a keyword that the user is really interested. The proposed scheme is very suitable for many critical applications in practice such as Google search. Detailed security analysis and performance evaluation show that our proposed scheme is indeed privacy-preserving and efficient. In our future work, we will take more security properties into consideration, e.g., achieving forward and backward security.

\section{Acknowledgment}
This work is supported in part by NSERC Discovery Grants (no. Rgpin 04009), Natural Science Foundation of Zhejiang Province (grant no. LZ18F020003), National Natural Science Foundation of China (grant no. U1709217), and NSFC Grant (61871331).

\bibliographystyle{cas-model2-names}

\bibliography{paper}


\bio{figs/fyin}
\textbf{Fan Yin} received the B.S. degree in information
security from the Southwest Jiaotong University,
Chengdu, China, in 2012. He is currently
working toward the Ph.D. degree in information
and communication engineering, Southwest
Jiaotong University, and also a visiting student at Faculty of Computer Science,  University of New Brunswick, Canada.
His research interests include searchable encryption, privacy-preserving and security for cloud security and
network security.
\endbio

\bio{figs/rxlu}
	\textbf{Rongxing Lu} is currently an associate professor at the Faculty of Computer Science (FCS), University of New Brunswick (UNB), Canada. He is a Fellow of IEEE. His research interests include applied cryptography, privacy enhancing technologies, and IoT-Big Data security and privacy. He has published extensively in his areas of expertise, and was the recipient of 9 best (student) paper awards from some reputable journals and conferences. Currently, Dr. Lu serves as the Vice-Chair (Conferences) of IEEE ComSoc CIS-TC (Communications and Information Security Technical Committee). Dr. Lu is the Winner of 2016-17 Excellence in Teaching Award, FCS, UNB.
\endbio
~\\
~\\
~\\
~\\
~\\
~\\
~\\
\bio{figs/yzheng}
	\textbf{Yandong Zheng} received her M.S. degree from the Department of Computer Science, Beihang University, China, in 2017 and She is currently pursuing her Ph.D. degree in the Faculty of Computer Science, University of New Brunswick, Canada. Her research interest includes cloud computing security, big data privacy and applied privacy.
\endbio

~\\
\bio{figs/jshao}
	\textbf{Jun Shao} received the Ph.D. degree from the Department of Computer Science and Engineering, Shanghai Jiao Tong University, Shanghai, China, in 2008. He was a Postdoctoral Fellow with the School of Information Sciences and Technology, Pennsylvania State University, State College, PA, USA, from 2008 to 2010. He is currently a Professor with the School of Computer Science and Information Engineering, Zhejiang Gongshang University, Hangzhou, China. His current research interests include network security and applied cryptography.
\endbio

\bio{figs/xueyang}
	\textbf{Xue Yang} received the Ph.D degree in Information and Communication Engineering from Southwest Jiaotong University, Chengdu, China, in 2019. She was a visiting student at the Faculty of Computer Science, University of New Brunswick, Canada, from 2017 to 2018. She is currently a Postdoctoral Fellow in the Tsinghua Shenzhen International Graduate School, Tsinghua University, China. Her research interests include big data security and privacy, applied cryptography and federated learning.
\endbio

~\\

\bio{figs/xhtang}
	\textbf{Xiaohu Tang} received the B.S. degree in applied mathematics from Northwest Polytechnic University, Xi'an, China, in 1992, the M.S. degree in applied mathematics from Sichuan University, Chengdu, China, in 1995, and the Ph.D. degree in electronic engineering from Southwest Jiaotong University, Chengdu, in 2001. 	
	From 2003 to 2004, he was a Research Associate with the Department of Electrical and Electronic Engineering, The Hong Kong University of Science and Technology. From 2007 to 2008, he was a Visiting Professor with the University of Ulm, Germany. Since 2001, he has been with the School of Information Science and Technology, Southwest Jiaotong University, where he is currently a Professor. His research interests include coding theory, network security, distributed storage, and information processing for big data. 	
	Dr. Tang was a recipient of the National excellent Doctoral Dissertation Award in 2003 (China), the Humboldt Research Fellowship in 2007 (Germany), and the Outstanding Young Scientist Award by NSFC in 2013 (China). He served as an Associate Editor for several journals, including the IEEE TRANSACTIONS ON INFORMATION THEORY and IEICE Transactions on Fundamentals, and served for a number of technical program committees of conferences.
\endbio

\end{document}